\title{Complexity of Reconfiguration in Surface Chemical Reaction Networks} 
\newif\ifabstract
\newif\iffull
\makeatletter \hypersetup{pdftitle={\@title}}}
 \gdef\xxxmark{%
   \expandafter\ifx\csname @mpargs\endcsname\relax 
     \expandafter\ifx\csname @captype\endcsname\relax 
       \marginpar{xxx}
     \else
       xxx 
     \fi
   \else
     xxx 
   \fi}
 \gdef\xxx{\@ifnextchar[\xxx@lab\xxx@nolab}
 \long\gdef\xxx@lab[#1]#2{\textbf{[\xxxmark #2 ---{\sc #1}]}}
 \long\gdef\xxx@nolab#1{\textbf{[\xxxmark #1]}}
\gdef\fps@figure{!htbp}}
\let\realbfseries=\bfseries
\def\bfseries{\realbfseries\boldmath}
\newtheorem{theorem}{Theorem}
\newtheorem{lemma}[theorem]{Lemma}
\newtheorem{corollary}[theorem]{Corollary}
\theoremstyle{definition}
\let\epsilon=\varepsilon
\begin{document}

\author{Robert M. Alaniz, Josh Brunner, Michael Coulombe, Erik D. Demaine, Jenny Diomidova, \\
 Timothy Gomez, Elise Grizzell, Ryan Knobel, Jayson Lynch, Andrew Rodriguez,  \\ 
 Robert Schweller, Tim Wylie}










\maketitle

\begin{abstract}

We analyze the computational complexity of basic reconfiguration problems for the recently introduced surface Chemical Reaction Networks (sCRNs), where ordered pairs of adjacent species nondeterministically transform into a different ordered pair of species according to a predefined set of allowed transition rules (chemical reactions).
In particular, two questions that are fundamental to the simulation of sCRNs are whether a given configuration of molecules can ever transform into another given configuration, and whether a given cell can ever contain a given species, given a set of transition rules. 
We show that these problems can be solved in polynomial time,
are NP-complete, or are PSPACE-complete in a variety of different settings, including when adjacent species just swap instead of arbitrary transformation (swap sCRNs), and when cells can change species a limited number of times ($k$-burnout). Most problems turn out to be at least NP-hard except with very few distinct species (2 or 3).

\end{abstract}

\section{Introduction}
The ability to engineer molecules to perform complex tasks is an essential goal of molecular programming.  A popular theoretical model for investigating molecular systems and distributed systems is Chemical Reaction Networks (CRNs) \cite{chen2014deterministic, soloveichik2008computation}. The model abstracts chemical reactions to independent rule-based interactions that creates a mathematical framework equivalent \cite{cook2009programmability} to other well-studied models such as Vector Addition Systems \cite{Karp:1969:JCSS} and Petri nets \cite{Petri1962PHD}. CRNs are also interesting for experimental molecular programmers, as examples have been built using DNA strand displacement (DSD) \cite{soloveichik2010dna}.

Abstract Surface Chemical Reaction Networks (sCRNs) were introduced in \cite{parScale} as a way to model chemical reactions that take place on a surface, where the geometry of the surface is used to assist with computation. In this work, the authors gave a possible implementation of the model similar to ideas of spatially organized DNA circuits \cite{muscat2013dna}. This strategy involves DNA strands being anchored to a DNA origami surface. These strands allow for ``species'' to be attached. Fuel complexes are pumped into the system, which perform the reactions. While these reactions are more complex than what has been implemented in current lab work, it shows a route to building these types of networks. 

\subsection{Motivation}
Feed-Forward circuits using DNA hairpins anchored to a DNA origami surface were implemented in \cite{chatterjee2017spatially}. This experiment used a single type of fuel strand. The copies of the fuel strand attached to the hairpins and were able to drive forward the computation. 

A similar model was proposed in \cite{dannenberg2015dna}, which modeled DNA walkers moving along tracks. These tracks have guards that can be opened or closed at the start of computation by including or omitting specific DNA species at the start. DNA walkers have provided interesting implementations such as robots that sort cargo on a surface \cite{thubagere2017cargo}.

A new variant of surface CRNs we introduce is the $k$-burnout model in which cells can switch states at most $k$ time before being stuck in their final state.  This models the practical scenario in which state changes expend some form of limited fuel to induce the state change.  Specific experimental examples of this type of limitation can be seen when species encode ``fire-once" DNA strand replacement reactions on the surface of DNA origami, as is done within the Signal Passing Tile Model~\cite{HSA:2012:PLS}.


\subsection{Previous Work}
The initial paper on sCRNs \cite{parScale} gave a 1D reversible Turing machine as an example of the computational power of the model. They also provided other interesting constructions such as building dynamic patterns, simulating continuously active Boolean logic circuits, and cellular automata. 
Later work in \cite{progSim} gave a simulator of the model, improved some results of \cite{parScale}, and gave many open problems- some of which we answer here.

 In \cite{swap}, the authors introduce the concept of swap reactions. These are reversible reactions that only ``swap'' the positions of the two species. The authors of \cite{swap} gave a way to build feed-forward circuits using only a constant number of species and reactions. These swap reactions may have a simpler implementation and also have the advantage of the reverse reaction being the same as the forward reaction, which makes it possible to reuse fuel species. 

A similar idea for swap reactions on a surface that has been studied theoretically are friends-and-strangers graphs \cite{defant2021friends}. This model was originally introduced to generalize problems such as the 15 Puzzle and Token Swapping. In the model, there is a location graph containing uniquely labeled tokens and a friends graph with a vertex for every token, and an edge if they are allowed to swap locations when adjacent in the location graph. The token swapping problem can be represented with a complete friends graph, and the 15 puzzle has a grid graph as the location graph and a star as the friends graph (the `empty square' can swap with any other square). Swap sCRNs can be described as multiplicities friends-and-strangers graph \cite{milojevic2022connectivity}, which relax the unique restriction, with the surface grid (in our case the square grid) as the location graph and the allowed reactions forming the edges of the friends graph. 

\subsection{Our Contributions}
In this work, we focus on two main problems related to sCRNs. The first is the reconfiguration problem, which asks given two configurations and a set of reactions, can the first configuration be transformed to the second using the set of reactions. The second is the $1$-reconfiguration problem, which asks whether a given cell can ever contain a given species.  Our results are summarized in Table~\ref{tab:results}. The first row of the table comes from the Turing machine simulation in \cite{parScale} although it is not explicitly stated. The size comes from the smallest known universal reversible Turing machine \cite{morita2007universal} (see \cite{woods2009complexity} for a survey on small universal Turing machines.)

We first investigate swap reactions in Section \ref{sec:swap}. We prove both problems are PSPACE-complete using only four species and three swap reactions. For reconfiguration, we show this complexity is tight by showing with three or less species and only swap reactions the problem is in P.

In Section \ref{sec:burn}, we study a restriction on surface CRNs called $k$-burnout where each species is guaranteed to only transition $k$ times. This is similar to the freezing restriction from Cellular Automata \cite{goles2021complexity,goles2015introducing,theyssier2022freezing} and Tile Automata \cite{chalk2018freezing}. We start with a simple reduction showing reconfiguration is NP-complete in $2$-burnout. This is also of interest since the reduction only uses three species types and a reaction set of size one. For $1$-reconfiguration, we show the problem is also NP-complete in $1$-burnout sCRNs. This reduction uses a constant number of species. 

In Section \ref{sec:1react}, we analyze reconfiguration for all sCRNs that have a reaction set of size one. For the case of only two species, we show for every possible reaction, the problem is solvable in polynomial time. With three species or greater, we show that reconfiguration is NP-complete. The hardness comes from the reduction in burnout sCRNs. 

Finally, in Section \ref{sec:conc}, we conclude the paper by discussing the results as well as many open questions and other possible directions for future research related to surface CRNs.

\definecolor{header}{rgb}{0.29,0,0.51} 
\definecolor{gray}{rgb}{0.85,0.85,0.85}
\definecolor{blueheader}{HTML}{2F4C74}
\def\header#1{\cellcolor{header}\textcolor{white}{\textbf{#1}}}

\begin{table}[t]
    \centering
    \rowcolors{2}{gray!75}{white}
\begin{tabular}{| c | c | c  | c | c | c | c |}
 \hline \arrayrulecolor{white}
 \header{Problem} & \header{Type} & \header{Graph} & \header{Species} & \header{Rules} & \header{Result} & \header{Ref} \\
 \arrayrulecolor{black} \hline
 Reconfiguration & sCRN & 1D & $17$ & $67$ & PSPACE-complete & \cite{parScale} \\ 
  1-Reconfiguration & Swap sCRN & Grid & $4$ & $ 3$ & PSPACE-complete & Thm.~\ref{thm:oneRecSwap} \\ 
   $1$-Reconfiguration & Swap sCRN & Any &   $\leq 3$ & Any & P & Thm.~\ref{thm:1recEzSwap}\\ 
 $1$-Reconfiguration & Swap sCRN & Any & Any & $\leq 2$ & P & Thm.~\ref{thm:1recEzSwap}\\
 Reconfiguration & Swap sCRN & Grid & $4$ & $ 3$ & PSPACE-complete & Thm.~\ref{thm:swapReduc} \\ 
 Reconfiguration & Swap sCRN & Any & $\leq 3$ & Any & P & Thm.~\ref{thm:easySwaps}\\ 
 Reconfiguration & Swap sCRN & Any & Any & $\leq 2$ & P & Thm.~\ref{thm:easySwaps}\\
 Reconfiguration & $2$-burnout & Grid & $3$ & $1$ & NP-complete & Thm.~\ref{thm:hamPathRed} \\ 
 1-Reconfiguration & $1$-burnout & Grid & $17$ & 40 & NP-complete & Thm.~\ref{thm:3satburn} \\ 
  Reconfiguration & sCRN & Grid & $\geq 3$ & $1$ &  NP-complete & Cor.~\ref{cor:s3r1} \\ 
 Reconfiguration & sCRN & Any & $ \leq 2$ & $1$ & P & Thm.~\ref{thm:2s1r} \\ \hline
\end{tabular}
    \caption{Summary of our and known complexity results for sCRN reconfiguration problems, depending on the type of sCRN, number of species, and number of rules. All problems are contained in PSPACE, while all $k$-burnout problems are in NP.}
    \label{tab:results}
\end{table}

\section{Surface CRN model}

\textbf{Chemical Reaction Network.}
A \emph{chemical reaction network (CRN)} is a pair $\Gamma = (S, R)$ where $S$ is a set of species and $R$ is a set of reactions, each of the form $A_1 + \cdots + A_j \to B_1 + \cdots + B_k$ where $A_i,B_i \in S$.
(We do not define the dynamics of general CRNs, as we do not need them here.)

\textbf{Surface, Cell, and Species.}
A \emph{surface} for a CRN $\Gamma$ is an (infinite) undirected graph $G$.
The vertices of the surface are called \emph{cells}.
A \emph{configuration} is a mapping from each cell to a species from the set $S$. 
While our algorithmic results apply to general surfaces,
our hardness constructions assume the practical case where
$G$ is a grid graph, i.e., an induced subgraph of the infinite square grid
(where omitted vertices naturally correspond to cells without any species).
%
When $G$ is an infinite graph, we assume there is some periodic pattern of cells that is repeated on the edges of the surface. 
Figure \ref{fig:scrnModel} shows an example set of species and reactions and a configuration of a surface. 

\textbf{Reaction.}
A \emph{surface Chemical Reaction Network (sCRN)} consists of a surface and a CRN, where every \emph{reaction} is of the form $A + B \rightarrow C + D$ denoting that, when $A$ and $B$ are in neighboring cells, they can be replaced with $C$ and $D$. $A$ is replaced with $C$ and $B$ with~$D$. 

\textbf{Reachable Configurations.}
For two configurations $I, T$, we write $I \rightarrow^1_\Gamma T$ if there exists a $r \in R$ such that performing reaction $r$ on a pair of species in $I$ yields the configuration $T$. Let $I \rightarrow_\Gamma T$ be the transitive closure of $I \rightarrow^1_\Gamma T$, including loops from each configuration to itself. Let $\Pi(\Gamma, I)$ be the set of all configurations $T$ for which $I \rightarrow_\Gamma T$ is true. A sequence of reachable states is shown in Figure \ref{fig:configs_example}



\begin{figure}[t]
    \centering
    \captionsetup{justification=centering}
    \includegraphics[width=0.6\textwidth]{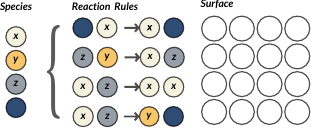}
    \caption{Example sCRN system.}
    \label{fig:scrnModel}
\end{figure}

\begin{figure}
    \centering
    \includegraphics[width=\textwidth]{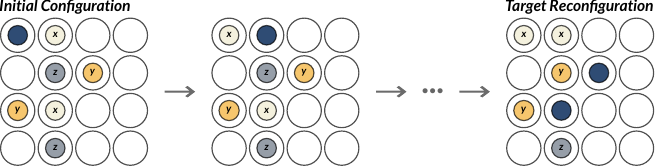}
    \caption{An initial, single step, and target configurations}
    \label{fig:configs_example}
\end{figure}
\subsection{Restrictions}

\textbf{Reversible Reactions.}
A set of reactions $R$ is \emph{reversible} if, for every rule $A + B \rightarrow C + D$ in $R$, the reaction $C + D \rightarrow A + B$ is also in $R$. We may also denote this as a single reversible reaction $A + B \rightleftharpoons C + D$.


\textbf{Swap Reactions.}
A reaction of the form $A + B \rightleftharpoons B + A$ is called a \emph{swap reaction}.


\textbf{\boldmath $k$-Burnout.} 
In the $k$-burnout variant of the model, each vertex of the system's graph can only switch states at most $k$ times (before ``burning out" and being stuck in its final state).








\subsection{Problems}

\textbf{Reconfiguration Problem.}
Given a sCRN $\Gamma$ and two configurations $I$ and $T$, is $T \in \Pi(\Gamma, S)$?

\textbf{\boldmath $1$-Reconfiguration Problem.} 
Given a sCRN $\Gamma$, a configuration $I$, a vertex $v$, and a species $s$, does there exist a $T \in \Pi(\Gamma, S)$ such that $T$ has species $s$ at vertex $v$?








\section{Swap Reactions}\label{sec:swap}
In this section, we will show $1$-reconfiguration and reconfiguration with swap reactions is PSPACE-complete with only $4$ species and $3$ swaps in Theorems \ref{thm:oneRecSwap} and \ref{thm:swapReduc}. We continue by showing that this complexity is tight, that is, reconfiguration with $3$ species and swap reactions is tractable in Theorems \ref{thm:easySwaps} and \ref{thm:1recEzSwap}. 

\subsection{Reconfiguration is PSPACE-complete}
We prove PSPACE-completeness by reducing from the motion planning through gadgets framework introduced in \cite{demaine2018computational}. This is a one player game where the goal is to navigate a robot through a system of gadgets to reach a goal location. The problem of changing the state of the entire system to a desired state has been shown to be PSPACE-complete \cite{ani2022traversability}. This reduction treats the model as a game where the player must perform reactions moving a robot species through the surface.

\subsubsection*{The Gadgets Framework}

\textbf{Framework.} A gadget is a finite set of locations and a finite set of states. Each state is a directed graph on the locations of the gadgets, describing the \emph{traversals} of the gadget. An example can be seen in Figure~\ref{fig:L2T}. Each edge (traversal) describes a move the robot can take in the gadget and what state the gadget ends up in if the robot takes that traversal. A robot enters from the start of the edge and leaves at the exit. 

In a \emph{system} of gadgets there are multiple gadgets connected by their locations.  The \emph{configuration} of a system of gadgets is the state of all gadgets in the system. There is a single robot that starts at a specified location. The robot is allowed to move between connected locations and allowed to move along traversals within gadgets. The system of gadgets can also be restricted to be planar, in which case the cyclic order of the locations on the gadgets is fixed, and the gadgets along with their connections must be embeddable in the plane without crossings.

The \emph{1-player motion planning reachability problem} asks whether there exists a sequence of moves within a system of gadgets which takes the robot from its initial location to a target location. The \emph{1-player motion planning reconfiguration problem} asks whether there exists a sequence of moves which brings the configuration of a system of gadgets to some target configuration.

There are many sets of motion planning models and gadgets to build our reduction. We select 1-player over 0-player since in the sCRN model there are many reactions that may occur and we are asking whether there exists a sequence of reactions which reaches some target configuration; in the same way 1-player motion planning asks if there exists a sequence of moves which takes the robot to the target location. The existential query of possible moves/swaps remains the same regardless of whether a player is making decisions vs them occurring by natural processes. The complexity of the gadgets used here are considered in the 0-player setting in \cite{demaine2022pspace}.


\begin{figure}[t]
    \centering
        \includegraphics[width=0.5\linewidth]{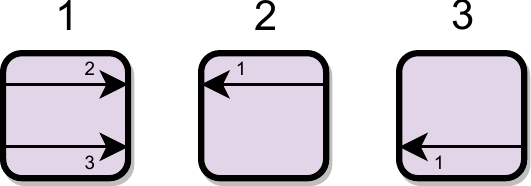}
    \caption{The Locking 2-Toggle (L2T) gadget and its states from the motion planning framework. The numbers above indicate the state and when a traversal happens across the arrows, the gadget changes to the indicated state.}
    \label{fig:L2T}    
\end{figure}

\textbf{Locking 2-Toggle.} The Locking $2$-toggle (L2T) is a $4$ location, $3$ state gadget. The states of the gadget are shown in Figure \ref{fig:L2T}. The L2T has advantages because it universal for reversible deterministic gadgets. Reversibility was important to picking a gadget since swap reactions are naturally reversible. 


\subsubsection*{Constructing the L2T}


We will show how to simulate the L2T in a swap sCRN system. Planar 1-player motion planning with the L2T was shown to be PSPACE-complete \cite{demaine2018computational}. We now describe this construction.

\textbf{Species.} We utilize $4$ species types in this reduction and we name each of them according to their role. First we have the \emph{wire}. The wire is used to create the connection graph between gadgets and can only swap with the robot species. The \emph{robot} species is what moves between gadgets by swapping with the wire and represents the robot in the framework. Each gadget initially contains $2$ robot species, and there is one species that starts at the initial location of the robot in the system. The robot can also swap with the key species. Each gadget has exactly $1$ \emph{key} species. The key species is what performs the traversal of the gadget by swapping with the lock species. The \emph{lock} species can only swap with the key. There are $4$ locks in each gadget. The locks ensure that only legal traversals are possible by the robot species.

These species are arranged into gadgets consisting of two length-$5$ horizontal tunnels. The two tunnels are connected by a length-$3$ central vertical tunnel at their $3$rd cell. At the $4$th cell of both tunnels there is an additional degree $1$ cell connected we will call the holding cell. 

\begin{figure}
    \centering
    \begin{subfigure}[b]{0.31\textwidth}
        \centering
        \includegraphics[scale=0.6]{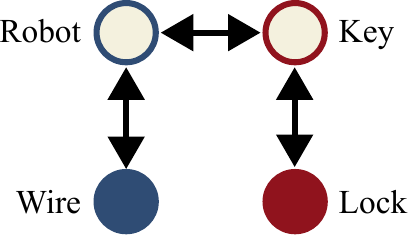}
        \caption{Swap rules/species}
        \label{fig:swapDiag}
    \end{subfigure}
    \begin{subfigure}[b]{0.2\textwidth}
        \includegraphics[scale=0.74]{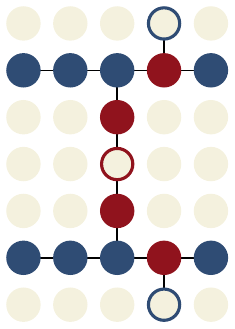}
        \caption{State 1}
        \label{fig:gadgetS1}
    \end{subfigure}\hfill
    \begin{subfigure}[b]{0.2\textwidth}
        \includegraphics[scale=0.74]{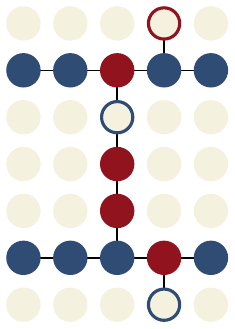}
        \caption{State 2}
        \label{fig:gadgetS2}
    \end{subfigure}\hfill
    \begin{subfigure}[b]{0.2\textwidth}
        \includegraphics[scale=0.74]{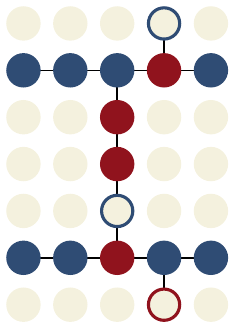}
        \caption{State 3}
        \label{fig:gadgetS3}
    \end{subfigure}
    \caption{Locking 2-toggle implemented by swap rules. (a) The swap rules and species names. (b-d) The three states of the locking 2-toggle.}
    \label{fig:L2Tgadget}
\end{figure}

\textbf{States and Traversals.} The states of the gadget we build are represented by the location of the key species in each gadget. If the key is in the central tunnel of the gadget then we are in state $1$ as shown in Figure \ref{fig:gadgetS1}. Note that in this state the key may swap with the adjacent locks, however we consider these configurations to also be in state 1 and take advantage of this later. The horizontal tunnels of the gadget in this state contain a single lock with an adjacent robot species.

States $2$ and $3$ are reflections of each other (Figures \ref{fig:gadgetS2} and \ref{fig:gadgetS3}). This state has a robot in the central tunnel and the key in the respective holding cell. The gadget in this state can only be traversed from right to left in one of the tunnels. 

\begin{figure}
    \centering
    \begin{subfigure}[b]{0.13\textwidth}
        \includegraphics[width=\linewidth]{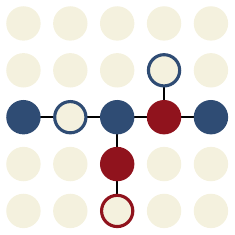}
    \end{subfigure}\hfill
    \begin{subfigure}[b]{0.13\textwidth}
        \includegraphics[width=\linewidth]{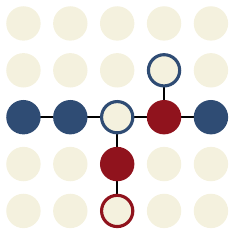}
    \end{subfigure}\hfill
    \begin{subfigure}[b]{0.13\textwidth}
        \includegraphics[width=\linewidth]{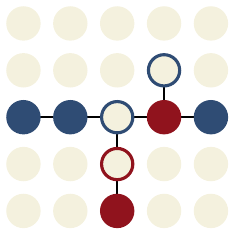}
    \end{subfigure}\hfill
    \begin{subfigure}[b]{0.13\textwidth}
        \includegraphics[width=\linewidth]{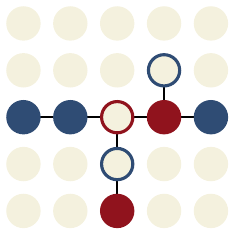}
    \end{subfigure}\hfill
    \begin{subfigure}[b]{0.13\textwidth}
        \includegraphics[width=\linewidth]{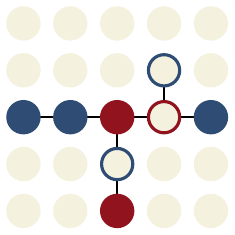}
    \end{subfigure}\hfill
    \begin{subfigure}[b]{0.13\textwidth}
        \includegraphics[width=\linewidth]{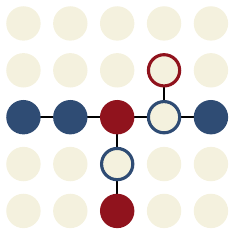}
    \end{subfigure}\hfill
    \begin{subfigure}[b]{0.13\textwidth}
        \includegraphics[width=\linewidth]{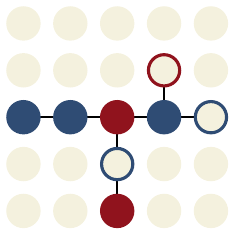}
    \end{subfigure}
    \caption{Traversal of the robot species.} 
    \label{fig:traversal}
\end{figure}

Figure \ref{fig:traversal} shows the process of a robot species traversing through the gadget. Notice when a robot species ``traverse'' a gadget, it actually traps itself to free another robot at the exit. We prove two lemmas to help verify the correctness of our construction. The lemmas prove the gadgets we design correctly implement the allowed traversals of a locking 2-toggle.

\begin{lemma}\label{lem:rightward}
    A robot may perform a rightward traversal of a gadget through the north/south tunnel if and only if the key is moved from the central tunnel to the north/south holding cell. 
\end{lemma}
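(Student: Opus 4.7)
The plan is to prove the two directions of the biconditional separately, with the forward (``if'') direction being a direct verification and the reverse (``only if'') direction requiring an invariant-based case analysis.

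For the ``if'' direction, I would simply exhibit the explicit sequence of swaps illustrated in Figure~\ref{fig:traversal} and check that each step invokes one of the three swap reactions listed in Figure~\ref{fig:swapDiag} (wire/robot, robot/key, or key/lock). Tracking the three active tokens—the entering robot, the in-tunnel lock, and the key in the central column—one verifies step by step that a robot entering the left of the chosen horizontal tunnel can (a) displace the wire cells ahead of it using robot/wire swaps, (b) use a robot/key swap to draw the key out of the central tunnel and route it past the lock through a sequence of key/lock swaps, and finally (c) let a trapped robot exit the right end, with the key coming to rest in the holding cell attached to the 4th cell of that tunnel. The remark immediately preceding the lemma (that a traversing robot actually traps itself and frees the other robot) is what makes the exit step legal, so this direction amounts to careful bookkeeping.

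For the ``only if'' direction, I would rely on three invariants that are immediate from the reaction list: (i) a lock can change cell only by a key/lock swap, so every lock displacement is ``paid for'' by moving the key; (ii) the wire species only swaps with the robot, so the interior of the tunnels is inert except where the traversing robot is; and (iii) species counts are preserved, so the one key and the four locks of the gadget remain in it throughout. To cross rightward, the traversing robot must pass through the cell originally occupied by the tunnel's lock, which by (i) forces the key to enter that tunnel and perform at least one key/lock swap. The only way for the robot to then proceed past the key without having to reverse its traversal is for the key to be pushed into a cell from which the tunnel is again clear; by (ii) and the gadget's geometry, the only dead-end branching off the horizontal tunnel is the holding cell attached to its 4th cell.

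The main obstacle is ruling out exotic interleavings in which the key detours through the central vertical tunnel into the opposite horizontal tunnel, permutes locks there, and returns. To handle this cleanly I would introduce a monovariant on the multiset of cells occupied by the key and the locks: since every key/lock swap merely exchanges the two species' positions, this multiset evolves in a very constrained way, and the gadget has only $O(1)$ cells and $O(1)$ tokens, so the set of reachable configurations is of constant size and can be enumerated. A finite check then confirms that among all reachable configurations in which the chosen horizontal tunnel admits a rightward pass by the robot, the key is always located in the corresponding holding cell, completing the ``only if'' direction and the lemma.
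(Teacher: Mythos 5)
Your proposal matches the paper's own argument: the achievability direction is the same explicit swap sequence (robot advances along wires to the third cell, the key is drawn out of the central tunnel via key/lock swaps and a robot/key swap, then parked in the holding cell so the trapped robot there is freed to exit), and the necessity direction rests on the same observation that the robot cannot swap with locks, so it is stuck at the cell adjacent to the locks unless the key arrives from the central tunnel. The paper simply states this blocking argument directly, whereas you close the case analysis with invariants plus a constant-size enumeration of the gadget's internal configurations—a slightly more formal wrapper around the same reasoning.
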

\begin{proof}
    The horizontal tunnels in state $1$ allow for a rightward traversal. The robot swaps with wires until it reaches the third cell where it is adjacent to two locks. However the key in the central tunnel may swap with the locks to reach the robot. The key and robot then swap. The key is then in the horizontal tunnel and can swap to the right with the lock there. It may then swap with the robot in the holding cell. This robot then may continue forward to the right and the key is stuck in the holding cell.

    Notice when entering from the left the robot will always reach  a cell adjacent to lock species. The robot may not swap with locks so it cannot traverse unless the key is in the central tunnel. 
\end{proof}

\begin{lemma}\label{lem:leftward}
    A robot may perform a leftward traversal of a gadget through the north/south tunnel if and only if the key is moved from the north/south holding cell to the central tunnel.
\end{lemma}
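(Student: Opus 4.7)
The plan is to mirror the argument used for Lemma \ref{lem:rightward}, since a leftward traversal in states 2 or 3 is essentially the reverse of a rightward traversal in state 1, and the swap reactions are reversible by definition. I would fix a tunnel (say the north one, with state 2) and describe an explicit sequence of swaps realizing the traversal, then handle the converse by blocking arguments based on which species can swap with which (only the rules robot--wire, robot--key, and key--lock are available).

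For the "if" direction, I would start with the state 2 configuration: the key occupies the north holding cell, the central tunnel contains a robot flanked by the two locks of the north and south horizontal tunnels, and the remainder of the north tunnel carries wires together with a lock immediately to the left of the holding cell. An incoming robot enters from the right of the north tunnel and swaps leftward with wires until it reaches the cell adjacent to the holding cell. It then swaps with the key, so the key moves out of the holding cell while the robot enters it. The key swaps leftward with the lock, then with the robot sitting in the central tunnel; this pushes that central robot out into the north horizontal tunnel while leaving the key inside the central tunnel. That freed robot then swaps leftward with the remaining wires in the north tunnel and exits on the left, leaving the gadget in state 1 with the key back in the central column, as required.

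For the "only if" direction, I would argue that any robot traveling leftward in the north tunnel eventually confronts the lock that sits between the holding cell and the central column; because robots cannot swap with locks, the only way to clear this obstruction is to first displace that lock via the key. The key starts in the holding cell, so its only outlet is to be swapped out by a robot entering the holding-cell neighborhood, after which the key's only productive sequence of swaps threads through the lock and the central-tunnel robot, terminating with the key resting in the central tunnel. Tracing the (very limited) set of legal swaps from state 2 shows that no other sequence completes a leftward traversal, and symmetrically for state 3.

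The main obstacle I expect is not the forward sequence, which is routine, but the converse: one must rule out "fake" traversals in which a robot appears to cross the gadget while leaving the key in an unintended final position, or in which the robot uses the south tunnel and central tunnel to shortcut around the locks. I would handle this by enumerating the constant number of configurations reachable from state 2 under the three swap rules and verifying that the only ones that place a fresh robot on the left boundary of the north tunnel are precisely those with the key back in the central column, i.e., state 1.
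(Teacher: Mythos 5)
Your proposal is correct and follows essentially the same route as the paper: the forward direction is the explicit swap sequence that reverses the rightward traversal (robot trades places with the key, the key threads through the lock and the central-tunnel robot, and the freed robot exits left), and the converse is the same lock-blocking argument showing the robot cannot pass without the key clearing the lock. Your description of state 2 misplaces the locks slightly (the two locks below the robot are the central tunnel's own locks, and the blocking lock sits at the tunnel's third cell rather than immediately beside the holding cell), but this does not affect the argument.
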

\begin{proof}
    In state $2$ the upper tunnel can be traversed and in state $3$ the lower tunnel can be traversed. The swap sequence for a leftward traversal is the reverse of the rightward traversal, meaning we are undoing the swaps to return to state $1$. The robot enters the gadget and swaps with the key, which swaps with the locks to move adjacent to the central tunnel. The key then returns to the central tunnel by swapping with the robot. The robot species can then leave the gadget to the left. 

    A robot entering from the right will not be able to swap to the position adjacent to the holding cell if it contains a lock. This is true in both tunnels in state $1$ and in the non-traversable tunnels in states $2$ and $3$. 
\end{proof}

We use these lemmas to first prove PSPACE-completeness of 1-reconfiguration. We reduce from the planar 1-player motion planning reachability problem.

\begin{theorem}\label{thm:oneRecSwap}
$1$-reconfiguration is PSPACE-complete with $4$ species and $3$ swap reactions or greater even when the surface is a subset of the grid graph. 
\end{theorem}
\begin{proof}
    Given a system of gadgets create a surface encoding the connection graph between the locations. Each gadget is built as described above in a state representing the initial state of the system. Ports are connected using multiple cells containing wire species. When more than two ports are connected we use degree-3 cells with wire species. The target cell for $1$-reconfiguration is a cell containing a wire located at the target location in the system of gadgets.  

    If there exists a solution to the robot reachability problem then we can convert the sequence of gadget traversals to a sequence of swaps. The swaps relocate a robot species to the location as in the system of gadgets. 

    If there exists a swap sequence to place a robot species in the target cell there exists a solution to the robot reachability problem. Any swap sequence either moves an robot along a wire, or traverses it through a gadget. From Lemmas \ref{lem:rightward} and \ref{lem:leftward} we know the only way to traverse a gadget is to change its state (the location of its key) and a gadget can only be traversed in the correct state. 
\end{proof}

Now we show Reconfiguration in sCRNs is hard with the same set of swaps is PSPACE-complete as well. We do so by reducing from the Targeted Reconfiguration problem which asks, given an initial and target configuration of a system of gadgets, does there exist sequence of gadget traversals to change the state of the system from the initial to the target and has the robot reach a target location. Note prior work only shows reconfiguration (without specifying the robot location) is PSPACE-complete\cite{ani2022traversability} however a quick inspection of the proof of Theorem 4.1 shows the robot ends up at the initial location so requiring a target location does not change the computational complexity for the locking 2-toggle. One may also find it useful to note that the technique used in \cite{ani2022traversability} for gadgets and in \cite{hearn2009games} for Nondeterministic Constraint Logic can be applied to reversible deterministic systems more generally. This means the method described in those could be used to give an alternate reduction directly from 1-reconfiguration of swap sCRNs to reconfiguration of swap sCRNs. 

\begin{theorem}\label{thm:swapReduc}
Reconfiguration is PSPACE-complete with $4$ species and $3$ swap reactions or greater. 
\end{theorem}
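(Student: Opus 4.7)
The plan is to reduce from the planar Targeted Reconfiguration problem for systems of locking 2-toggles, which is PSPACE-complete by \cite{ani2022traversability}. Membership in PSPACE is already established (see the caption of Table~\ref{tab:results}). I will reuse the exact surface construction from Theorem~\ref{thm:oneRecSwap}: each L2T is replaced by the swap-sCRN gadget of Figure~\ref{fig:L2Tgadget}, and the ports are connected by wire cells exactly as before. The only new ingredient is specifying the target surface.

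First I would fix a canonical surface representative for each of the three gadget states: state 1 has the key in the center cell of the vertical tunnel with the four locks at the tunnel ends and the two extra robots at the far horizontal tunnel cells; state 2 (resp.\ state 3) has the key in the north (resp.\ south) holding cell, a robot in the center of the vertical tunnel, and the locks/remaining robot in the positions depicted in Figure~\ref{fig:gadgetS2} (resp.\ Figure~\ref{fig:gadgetS3}). Given an instance of Targeted Reconfiguration consisting of an initial configuration, a target configuration, and a target robot location, the initial surface places every gadget in the canonical representative of its initial state with the singular ``moving'' robot at the cell corresponding to the robot's starting port, and fills all other connective cells with wire. The target surface places every gadget in the canonical representative of its target state and puts the moving robot species at the wire cell corresponding to the target location, with all wires elsewhere identical to the initial surface.

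For the forward direction, I would translate a winning traversal sequence into swap moves: a move of the robot along a connection becomes a sequence of wire/robot swaps along the wire path, and a traversal of a gadget becomes the swap sequence described in Lemmas~\ref{lem:rightward} and~\ref{lem:leftward}, which both moves a robot through the gadget and updates the key to the canonical position of the next state. Concatenating these yields a swap sequence reaching the target surface. For the reverse direction, I would use the locality of species: key and lock species never leave their home gadget (they can only swap with one another and with a robot inside the gadget), so the configuration of each gadget can be tracked independently. A sequence of swaps reaching the target must leave every gadget in its canonical target state, and by Lemmas~\ref{lem:rightward} and~\ref{lem:leftward} the key of a gadget changes its ``canonical location'' only as part of a full robot traversal of that gadget; these gadget-local traversals, interleaved with the wire swaps that move the robot between gadgets, project to a valid sequence of L2T gadget moves ending with the robot at the target location and the system in the target configuration.

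The main obstacle I expect is the reverse direction's bookkeeping: a swap sequence in the sCRN can freely interleave partial, aborted, or reversed swaps inside different gadgets, and one must argue that nothing breaks when the target pins down a specific canonical configuration. The key observation that controls this is that within a single gadget the key can only reach a holding cell via the complete traversal protocol (because the key cannot swap with wire, only with locks and robots, and reaching a holding cell requires a robot to have entered and a robot to exit), so any intermediate ``half-traversal'' that is left dangling at the end of the swap sequence prevents the gadget from matching its canonical target representative. This forces every started traversal to be completed, making the projection to gadget moves well-defined and yielding a valid L2T reconfiguration sequence.
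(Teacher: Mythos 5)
Your proposal matches the paper's proof in essence: the paper likewise reduces from the targeted reconfiguration problem for locking 2-toggle systems, reuses the gadget construction of Theorem~\ref{thm:oneRecSwap} with initial and target surfaces encoding the gadget states and robot location, and invokes Lemmas~\ref{lem:rightward} and~\ref{lem:leftward} for the correspondence between swap sequences and gadget traversals; your write-up just spells out the canonical-representative and reverse-direction bookkeeping more explicitly. The only small caveat is that \cite{ani2022traversability} as stated only covers reconfiguration without a prescribed final robot location, so (as the paper remarks) one must observe that its proof leaves the robot at its initial location, which makes the targeted variant hard as well.
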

\begin{proof}
    Our initial and target configurations of the surface are built with the robot species at the robots location in the system of gadget, and each key is placed according to the starting configuration of the gadget. 

    Again as in the previous theorem we know from Lemmas \ref{lem:rightward} and \ref{lem:leftward} the robot species traversal corresponds to the traversals of the robot in the system of gadgets. The target surface can be reached if and only the target configuration in the system of gadgets is reachable. 
\end{proof}

\subsection{Polynomial-Time Algorithm}


Here we show that the previous two hardness results are tight: when restricting to a smaller cases, both problems become solvable in polynomial time. We prove this by utilizing previously known algorithms for \emph{pebble games}, where labeled pebbles are placed on a subset of nodes of a graph (with at most one pebble per node). A \emph{move} consists of moving a pebble from its current node to an adjacent empty node. These pebble games are again a type of multiplicity friends-and-strangers graph. 

\begin{theorem}\label{thm:easySwaps}
Reconfiguration is in P with $3$ or fewer species and only swap reactions.
Reconfiguration is also in P with $2$ or fewer swap reactions and any number of species.
\end{theorem}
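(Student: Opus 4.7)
The plan is to view swap sCRN reconfiguration as a multiplicities friends-and-strangers problem (as flagged in the introduction) and exploit the fact that the \emph{friends graph} $F$---with a vertex per species and an edge per swap reaction---has only a constant number of edges under either hypothesis of the theorem. This lets us reduce to known polynomial-time pebble motion algorithms on graphs after a short case analysis.

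First I would record the easy structural observations. Every swap preserves the global multiset of species, so I immediately reject if the multisets in $I$ and $T$ differ. Any species not incident to an edge of $F$ is completely immobile and must sit at identical cells in $I$ and $T$. After deleting immobile species, the mobile species partition according to the connected components of $F$: species in distinct components never interact, and the cells available to a given component are (after the multiset check) determined by either $I$ or $T$. Thus the instance splits into one independent subproblem per connected component of $F$, operating on its induced sub-surface.

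Next I would enumerate the possible non-trivial components of $F$. With at most $3$ species, $F$ has at most $3$ possible edges; with at most $2$ swap reactions, $F$ has at most $2$ edges. In both cases every non-singleton component of $F$ is one of: a single edge $\{A,B\}$, a path on three vertices $A$, $B$, $C$ (with $B$ the middle), or the triangle on $\{A,B,C\}$. The triangle case is easiest: within each connected component of the surface we can use any present species as a pivot to simulate arbitrary transpositions, so reachability reduces to matching multisets component-by-component. The single-edge case is a two-color token-swapping problem on a graph, which is directly a classical pebble motion instance (copies of one color act as indistinguishable blanks) and has a polynomial-time reachability algorithm. The path case (with $B$ the middle) is the main obstacle: $A$ and $C$ cannot swap directly, so the order in which $A$'s and $C$'s appear along certain paths becomes an invariant; the $B$'s nevertheless function as mobile blanks shuttling non-$B$ species around, and reachability again reduces to a polynomial-time-solvable pebble motion instance.

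To finish the theorem, the second half (any number of species, at most $2$ swap reactions) is subsumed by the above analysis: after stripping immobile species and decomposing by components of $F$, every remaining non-trivial subproblem is the single-edge case or the path case, both handled in polynomial time. The hardest step to pin down carefully will be the path case, because one must formulate the correct pebble motion instance---including any connectivity or order-preservation condition inherited from the $A$-vs-$C$ ordering invariant---so that an off-the-shelf polynomial-time algorithm can be invoked as a black box; everything else amounts to multiset comparisons and bookkeeping over the connected components of $F$ and of the surface.
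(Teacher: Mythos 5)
Your proposal is correct and takes essentially the same route as the paper: decompose by the friends graph, settle the disjoint/single-edge cases by per-component species counts, and treat the shared-species (path) case as unlabeled two-color pebble motion with the common species acting as blanks---exactly the paper's reduction, which invokes the linear-time feasibility algorithm of Goraly et al.~\cite{goraly2010multi}, so the ``ordering invariant'' you worry about needs no separate formulation (the pebbles are unlabeled and the cited algorithm decides feasibility outright). Your explicit treatment of the triangle case and of immobile species is slightly more thorough than the paper's write-up, but the core argument is the same.
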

\begin{proof}

First we will cover the case of only two swap reactions. There are two possibilities: the two reactions share a common species or they do not. If they do not, we can partition the problem into two disjoint problems, one with only the species involved in the first reaction and the other with only the species from the second reaction. Each of these subproblems has only one reaction, and is solvable if and only if each connected component of the surface has the same number of each species in the initial and target configurations. 

The only other case is where we have three species, A, B, and C, where A and C can swap, B and C can swap, but A and B cannot swap. In this case, we can model it as a pebble motion problem on a graph. Consider the graph of the surface where we put a white pebble on each A species vertex, a black pebble on each B species vertex, and leave each C species vertex empty. A legal swap in the surface CRN corresponds to sliding a pebble to an adjacent empty vertex. Goraly et al.~\cite{goraly2010multi} gives a linear-time algorithm for determining whether there is a feasible solution to this pebble motion problem. Since the pebble motion problem is exactly equivalent to the surface CRN reconfiguration problem, the solution given by their algorithm directly says whether our surface CRN problem is feasible.
\end{proof}
\begin{theorem}\label{thm:1recEzSwap}
$1$-reconfiguration is in P with $3$ or fewer species and only swap reactions.
$1$-reconfiguration is also in P with $2$ or fewer swap reactions.
\end{theorem}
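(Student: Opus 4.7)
The plan is to parallel the case analysis used in the proof of Theorem \ref{thm:easySwaps}, exploiting the invariant that a swap reaction only permutes the multiset of species on the surface, and that any cell whose species appears in no reaction is frozen forever. The key observation is that $1$-reconfiguration asks a strictly simpler marginal question than reconfiguration, so in each subcase it suffices to read off the possible species at $v$ from the structural description of reachable configurations.

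First I would handle the case of at most two swap reactions. If the two reactions share no species, the problem decomposes into two independent one-reaction subproblems on disjoint species sets; $1$-reconfiguration for $(v, s)$ then reduces to the single sub-instance whose species include $s$, with $I(v) = s$ returned trivially when $s$ appears in neither. It then suffices to solve the single swap reaction case $A \rightleftharpoons B$: every cell whose species is not $A$ or $B$ is a wall, and within each connected component $H$ of the subgraph of $G$ induced by $A$/$B$-holding cells the $A$'s and $B$'s may be freely permuted. Hence $v$ can hold $s$ iff $I(v) = s$, or $I(v), s \in \{A, B\}$ and $v$'s component of $H$ contains at least one cell currently holding $s$. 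Both tests run in linear time. The remaining subcase, two reactions sharing a species, necessarily involves only three species in total and is folded into the next case.

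Next I would handle three or fewer species with an arbitrary number of swap reactions. Up to relabeling there are only a few subcases: zero or one swap reactions (dispatched above); exactly two swap reactions, which must share a species to involve only three species and hence form a pebble-motion instance with one species as the empty and the other two as distinct pebble colors; or all three swap reactions $A \rightleftharpoons B$, $B \rightleftharpoons C$, $A \rightleftharpoons C$ present, in which case every species pair is freely interchangeable and every permutation is realized within each connected component, so $v$ can hold $s$ iff $s$ appears in $v$'s connected component of $G$ (or $I(v) = s$).

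The main obstacle is the two-reaction pebble-motion subcase, because Goraly et al.\ decide reachability of a full target configuration, whereas $1$-reconfiguration asks only about the species at a single cell. I would bridge the gap by appealing to the group-theoretic structure of pebble-motion reachability (Wilson's theorem together with the refinements used by Goraly et al.): within each $2$-connected block of $v$'s component, the group of cell permutations realized by legal moves is explicitly described and is transitive on the cells of the block outside a small list of exceptional graphs. Reading off which source cells $u$ with $I(u) = s$ can be mapped to $v$ by some element of that group answers the question in polynomial time. Equivalently, one can enumerate each candidate source $u$ with $I(u) = s$ in $v$'s component, construct a plausible target configuration placing $s$ at $v$, and invoke Goraly et al.'s algorithm as an oracle polynomially many times.
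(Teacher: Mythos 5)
Your easy cases (two reactions with disjoint species, a single swap reaction, and the all-three-swaps case) are fine and match the paper's criteria, but the central subcase---two swap reactions sharing a species, i.e.\ $A \rightleftharpoons C$, $B \rightleftharpoons C$ with target species $A$---is where your argument has a genuine gap. Invoking Wilson's theorem and ``the group of cell permutations realized by legal moves'' is not a valid bridge here: that machinery describes the single-blank, distinctly-labeled pebble setting, whereas this instance has arbitrarily many blanks ($C$ cells) and two classes of indistinguishable pebbles, and the question is not transitivity of a permutation group but whether \emph{some} $A$-pebble can reach $v$; you give no argument adapting the block/cut-vertex analysis to that setting. Your fallback---``construct a plausible target configuration placing $s$ at $v$ and invoke Goraly et al.\ as an oracle''---is not a correct reduction as stated: $1$-reconfiguration asks whether \emph{any} of exponentially many reachable configurations has $s$ at $v$, while the algorithm of Goraly et al.~\cite{goraly2010multi} decides reachability of one fully specified target. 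Choosing a single candidate target per source cell can return ``no'' even when some other configuration with $s$ at $v$ is reachable, and you prove no canonical-target lemma that would rule this out.

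The paper closes exactly this gap with a different, more elementary device: for each cell $u$ initially holding $A$, replace every \emph{other} $A$ by $B$ (legal swap sequences are unchanged, since $A$ and $B$ interact only with $C$), so the question becomes whether a single distinguished robot can reach $v$ among movable obstacles on a graph; this is answered in polynomial time by Papadimitriou et al.~\cite{papadimitriou1994motion}, whose problem asks precisely for the robot's reachability of a vertex without constraining where the obstacles end up---matching the existential, single-cell nature of $1$-reconfiguration. Linearly many calls, one per initial $A$, finish the proof. If you want to keep your outline, you should replace your pebble-group step with this relabeling-plus-robot-motion-planning reduction (or prove a lemma showing a canonical full target configuration suffices before calling the algorithm of~\cite{goraly2010multi}).
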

\begin{proof}
If there are only two swap reactions, we again have two cases depending on whether they share a common species. If they do not share a common species, then we only need to consider the rule involving the target species. The problem is solvable if and only if the connected component of the surface of species involved in this reaction containing the target cell also has at least one copy of the target species. Equivalently, if the target species is A, and A and B can swap, then there must either be A at the target location or a path of B species from the target location to the initial location of an A species.

The remaining case is when we again have three species, A, B, and C, where A and C can swap, B and C can swap, but A and B cannot swap. If C is the target species, then the problem is always solvable as long as there is any C in the initial configuration. Otherwise, suppose without loss of generality that the target species is A. Some initial A must reach the target location. For each initial A, consider the modified problem which has only that single A and replaces all of the other copies of A with B. A sequence of swaps is legal in this modified problem if and only if it was legal in the original problem. The original problem has a solution if and only if any of the modified ones do. We then convert each of these problems to a robot motion planning problem on a graph: place the robot at the vertex with a single copy of A, and place a moveable obstacle at each vertex with a B. A legal move is either sliding the robot to an adjacent empty vertex or sliding an obstacle to an adjacent empty vertex. Papadimitriou et al. \cite{papadimitriou1994motion} give a simple polynomial time algorithm for determining whether it is possible to get the robot to a given target location. By applying their algorithm to each of these modified problems (one for each cell that has an initial A), we can determine whether any of them have a solution in polynomial time (since there are only linearly many such problems), and thus determine whether the original 1-reconfiguration problem has a solution in polynomial time.

\end{proof}




\section{Burnout}\label{sec:burn}
In this section, we show reconfiguration in $2$-burnout with species $(A, B, C)$ and reaction $A + B \rightarrow C + A$ is NP-complete in Theorem \ref{thm:hamPathRed}. Next, we show $1$-reconfiguration in $1$-burnout with $17$ species and $40$ reactions is NP-complete in Theorem \ref{thm:3satburn}.

Reconfiguration and 1-Reconfiguration for burnout sCRNs are in NP since there is the length of any reconfiguration is bounded. For space we do not include this proof but note this has been proved in other system such as Resource Bounded Cellular Automata \cite{dennunzio2017computational}, Freezing Cellular Automata \cite{goles2021complexity} and Freezing Tile Automata \cite{caballero2020verification}.

\subsection{2-Burnout Reconfiguration}
This is a simple reduction from Hamiltonian Path, specifically when we have a stated start and end vertex. 

\begin{figure}[t]
    \centering
    \includegraphics[width=0.85\textwidth]{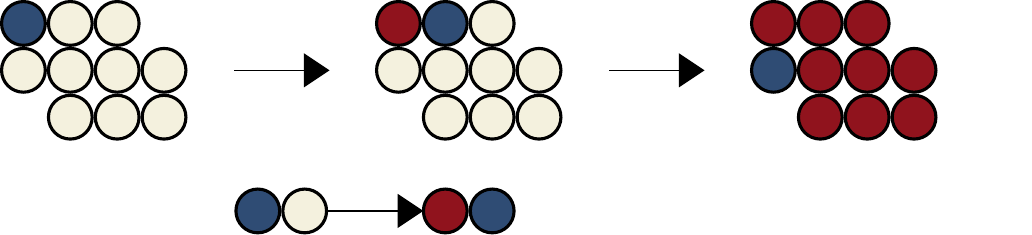}
    \caption{An example reduction from Hamiltonian Path. We are considering graphs on a grid, so any two adjacent locations are connected in the graph. Left: an initial board with the starting location in blue.  Middle: One step of the reaction. Right: The target configuration with the ending location in blue. Bottom: the single reaction rule.}
    \label{fig:hamPath}
\end{figure}

\begin{theorem}\label{thm:hamPathRed}
Reconfiguration in $2$-burnout sCRNs with species $(A, B, C)$ and reaction $A + B \rightarrow C + A$ is NP-complete even when the surface is a subset of the grid graph. It is also NP-complete with the same species and reactions without the $2$-burnout restriction. 
\label{thm:hamPath}
\end{theorem}
\begin{proof}
Let $\Gamma = \{(A, B, C), (A + B \rightarrow C + A)\}$. Given an instance of the Hamiltonian path problem on a grid graph $H$ with a specified start and target vertex $v_s$ and $v_t$, respectively, create a surface $G$ where each cell in $G$ is a node from $H$. Each cell contains the species $B$ except for the cell representing $v_s$ which contains species $A$. The target surface has species $C$ in every cell except for the final node containing $A$, $v_t$. An example can be seen in Figure \ref{fig:hamPath}.

The species $A$ can be thought of as an agent moving through the graph. The species $B$ represents a vertex that hasn't been visited yet, while the species $C$ represents one that has been. Each reaction moves the agent along the graph, marking the previous vertex as visited. 

($\Rightarrow$) If there exists a Hamiltonian path, then the target configuration is reachable. The sequence of edges in the path can be used as a reaction sequence moving the agent through the graph, changing each cell to species $C$ finishing at the cell representing $v_t$. 

($\Leftarrow$) If the target configuration is reachable, there exists a Hamiltonian path. The sequence of reactions can be used to construct the path that visits each of the vertices exactly once, ending at $v_t$. 

Note that we have not discussed the effect of Burnout on the reduction. However since each cell transitions through species in the following order: $B, A, C$ this reaction always results in a $2$-burnout sCRN so the reduction holds with and without the restriction. 

This means the CRN is $2$-burnout which bounds the max sequence length for reaching any reachable surface, putting the reconfiguration problem in NP. 
\end{proof}

\subsection{1-Burnout 1-Reconfiguration}
For $1$-burnout 1-reconfiguration, we show NP-completeness by reducing from 3SAT and utilizing the fact that once a cell has reacted it is burned out and can no longer participate in later reactions.

\begin{figure}[ht]
    \centering
    \includegraphics[scale=0.8]{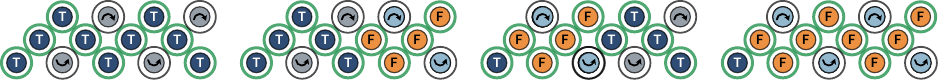}
    \caption{All the possible configurations of two variable gadgets.}
    \label{fig:settingVariables}
\end{figure}


\begin{theorem} \label{thm:3satburn}
    $1$-reconfiguration in $1$-burnout sCRNs with $17$ species and $40$ reactions is NP-complete even when the surface is a subset of the grid graph. It is also NP-complete with the same species and reactions without the $1$-burnout restriction. 
\end{theorem}

\begin{proof}
We reduce from 3SAT. The idea is to have an `agent' species traverse the surface to assign variables and check that the clauses are satisfied by `walking' through each clause. If the agent can traverse the whole surface and mark the final vertex as  `satisfied', there is a variable assignment that satisfies the original 3SAT instance.

\emph{Variable Gadget.} The variable gadget is constructed to allow for a nondeterministic assignment of the variable via the agent walk. At each intersection, the agent `chooses' a path depending on the reaction that occurs. If the agent chooses `true' for a given variable, it will walk up then walk down to the center species. If the agent chooses `false', the agent will walk down then walk up to the center species. From the center species, the agent can only continue following the path it chose until it reaches the next variable gadget. Examples of the agent assigning variables can be seen in Figure \ref{fig:settingVariables}.

\begin{figure}[t]
    \centering
    \begin{subfigure}[b]{0.4\textwidth}
        \centering
        \includegraphics[scale=0.8]{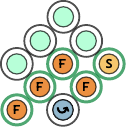}
        \caption{Successful navigation of an intersection.}
        \label{fig:var_turn}
    \end{subfigure}\hfil
    \begin{subfigure}[b]{0.5\textwidth}
        \centering
        \includegraphics[scale=0.8]{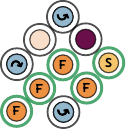}
        \caption{Agent stuck due to not following the assignment.}
        \label{fig:var_turn_block}
    \end{subfigure}
    \caption{The assignment `locking' process.}
    \label{fig:var_blocking}
\end{figure}

Each variable assignment is `locked' by way of geometric blocking. When the agent encounters a variable gadget whose variable has already been assigned, the agent must follow that same assignment or it will get `stuck' trying to react with a burnt out vertex. This can be seen in Figure \ref{fig:var_blocking}.

\emph{Initial Configuration.} First, the configuration is constructed with variable gadgets connected in a row, one for each variable in the 3SAT instance. This row of variable gadgets is where the agent will nondeterministically assign values to the variables. Next, a row of variable gadgets, one row for each clause, is placed on top of the assignment row, connected with helper species to fill in the gaps.

For each clause, if a certain variable is present, the center species of the variable gadget reflects its literal value from the clause. For example, if the variable $x1$ in clause $c1$ should be true to satisfy the clause, the variable gadget representing $x1$ in $c1$'s row will contain a $T$ species in the center cell. Lastly, the agent species is placed in the bottom left of the configuration. An example configuration can be seen in Figure \ref{fig:burnoutInput}. 

\begin{figure}[t]
\centering
    \begin{subfigure}[b]{0.49\textwidth}
        \includegraphics[width=1.0\linewidth]{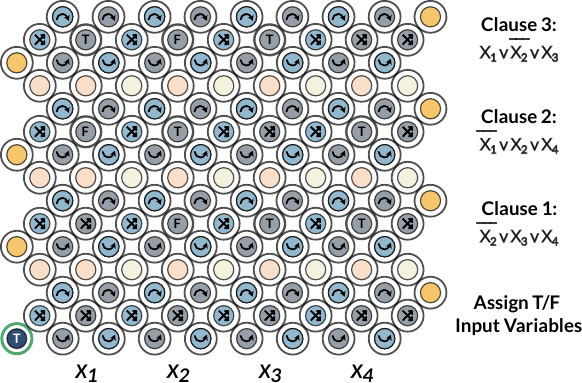} 
        \caption{Example starting configuration.}
    \end{subfigure}\hfill
    \begin{subfigure}[b]{0.49\textwidth}
        \includegraphics[width=1.0\linewidth]{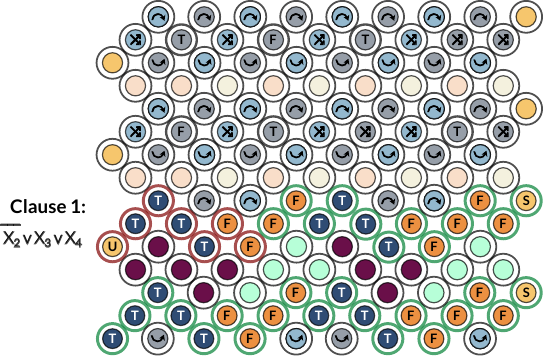}
        \caption{The surface after evaluating the first clause. }
    \end{subfigure}
    
    \caption{Reduction from 3SAT to 1-burnout 1-reconfiguration. (a) The starting configuration of the surface for the example formula $\varphi=(\lnot x_2 \lor x_3 \lor x_4)\land(\lnot x_1 \lor x_2 \lor x_4)\land(x_1 \lor \lnot x_2 \lor x_3)$. (b) The configuration after evaluating the first clause. A red outline represents the unsatisfied state, and a green outline represents the satisfied state.}
    \label{fig:burnoutInput}
\end{figure}

The agent begins walking and nondeterministically assigns a value to each variable. After assigning every variable, the agent walks right to left. If at an intersection, the agent chooses a different assignment than it did its first pass, the agent becomes `stuck' only being able to react with a burnt out vertex. 

After walking all the way to the left, the first clause can be checked. The agent starts in the unsatisfied state, walking through each variable in the row, left to right. If the current variable assignment at a variable gadget satisfies this clause, the agent changes to the satisfied state and continues walking. If the agent walks through all the variables without becoming satisfied, the computation ends. If the clause was satisfied, the agent continues by walking back, right to left, to begin evaluation of the next clause. If the agent walks all the way to the final vertex with a satisfied state, then the initial variable assignment satisfies all the clauses.

($\Rightarrow$) If there exists a variable assignment that satisfies the 3SAT instance, then the final vertex can be marked with the satisfied state $s$. The agent can only mark the final cell with the satisfied state $s$ if all clauses can be satisfied.

($\Leftarrow$) If the final vertex can be marked with satisfied state $s$, there exists a variable assignment that satisfies the 3SAT instance. The variable assignment that the agent nondeterministically chose can be read and used to satisfy the 3SAT instance.
\end{proof}

\begin{figure}[t]
    \centering
    \includegraphics[width=\textwidth]{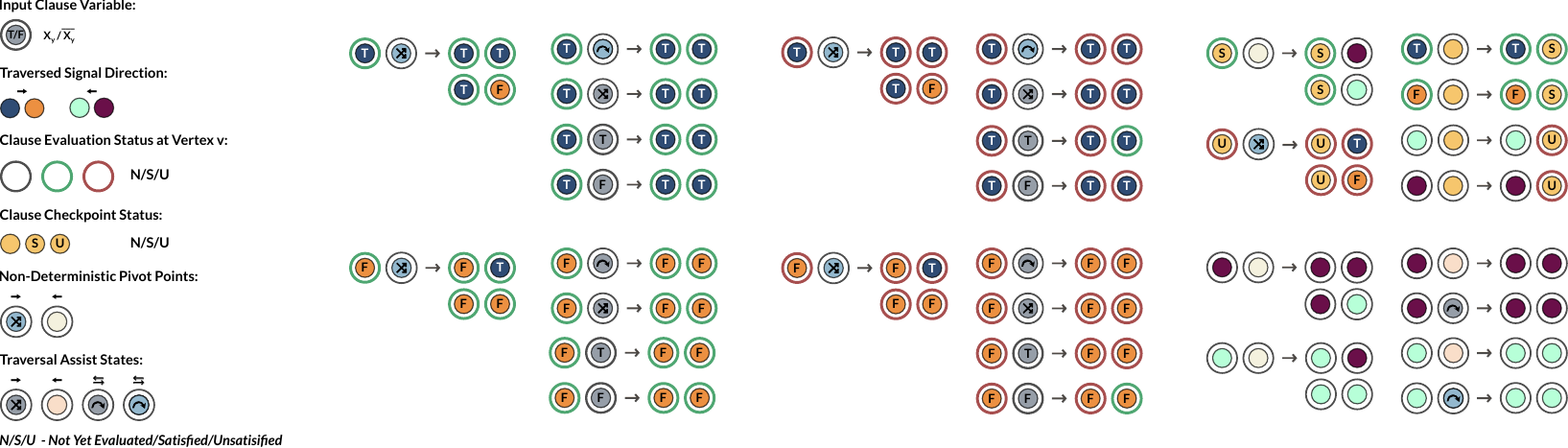}
    \caption{Species identification and transition rules for 1-burnout 1-reconfiguration.}
    \label{fig:burnoutStatesTransitions}
\end{figure}






\section{Single Reaction}\label{sec:1react}
When limited to a single reaction, we show a complete characterization of the reconfiguration problem. There exists a reaction using $3$ species for which the problem is NP-complete. For all other cases of $1$ reaction, the problem is solvable in polynomial time. 

\subsection{2 Species}
We start with proving reconfiguration is in P when we only have $2$ species and a single reaction.  

\begin{lemma}
Reconfiguration with species $\{A, B\}$ and reaction $A + A \rightarrow A + B$ OR $A + B \rightarrow A + A$ is solvable in polynomial time on any surface. 
\label{lem:sameDiff}
\end{lemma}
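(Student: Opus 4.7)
The plan is to observe that each of the two allowed reactions is monotone in one direction on the set of $A$-labelled cells, and then to reduce reachability to a checkable condition on connected components of a single induced subgraph. Throughout, write $S_I^A$ and $S_T^A$ for the cells labelled $A$ in the initial and target configurations, respectively.

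First I would treat the reaction $A+A\to A+B$. Under this reaction $S^A$ can only shrink, by exactly one cell per step, and the cell being removed must have an $A$-neighbour at the moment of reaction. My claim would be that the target is reachable iff $S_T^A\subseteq S_I^A$ and every connected component of the induced subgraph $G[S_I^A]$ contains at least one cell of $S_T^A$. Necessity of the component condition is the essential obstruction: inside a component $C$ of $G[S_I^A]$ with $C\cap S_T^A=\emptyset$ every cell must be eliminated, but the last surviving $A$ in $C$ has no $A$-neighbour and is therefore stuck. For sufficiency I would give an explicit schedule: in each component $C$ pick an anchor $v\in C\cap S_T^A$ and eliminate the cells of $(C\cap S_I^A)\setminus S_T^A$ in decreasing order of graph distance from $v$ inside $G[S_I^A]$; at the moment a cell $c$ is eliminated, its predecessor on a shortest $c$-to-$v$ path has strictly smaller distance, has not yet been eliminated, and supplies the required catalytic $A$.

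The reaction $A+B\to A+A$ is dual: now $S^A$ only grows, so the necessary containment is $S_I^A\subseteq S_T^A$, and the relevant induced subgraph is $G[S_T^A]$. I would claim reachability holds iff every connected component of $G[S_T^A]$ contains at least one seed from $S_I^A$. Necessity holds because each newly created $A$ is adjacent to an existing $A$ and target-$B$ cells may never be converted, so the $A$-region can spread only along edges that stay in $G[S_T^A]$ and hence remains confined to the component of its seed. Sufficiency is a straightforward BFS: within each component, expand outwards from an initial $A$, which keeps the frontier always adjacent to the current $A$-region.

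Both conditions are checkable in polynomial time by one connected-components scan on $G[S_I^A]$ or $G[S_T^A]$, so the overall algorithm is polynomial. The hard part will be the necessity argument for the first reaction: ruling out any exotic reordering that could supply the final $A$ of a component with a neighbour. This should follow from the fact that $B$-cells never revert to $A$, so components of $G[S_I^A]$ cannot merge and cannot acquire new $A$-neighbours from outside, making the ``trapped final $A$'' obstruction genuinely unavoidable.
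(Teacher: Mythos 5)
Your proposal is correct and takes essentially the same approach as the paper: your condition that $S_T^A\subseteq S_I^A$ and every connected component of $G[S_I^A]$ contain a cell that stays $A$ is exactly the polynomial-time-checkable form of the paper's certificate (a forest of trees rooted at catalyst cells containing all flip cells), and your farthest-first elimination schedule plays the same role as the paper's post-order traversal, with the same ``last surviving $A$ is stuck'' obstruction for necessity. The only cosmetic difference is that you treat $A+B\rightarrow A+A$ directly by a dual growth argument, whereas the paper disposes of it by reversing the reaction and swapping the initial and target configurations.
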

\begin{proof}
The reaction $A + B \rightarrow A + A$ is the reverse of the first case. By flipping the target and initial configurations, we can reduce from reconfiguration with $A + B \rightarrow A + A$ to reconfiguration $A + A \rightarrow A + B$.

We now solve the case where we have the reaction $A + A \rightarrow A + B$. 

All cells that start and end with species $B$ can be ignored as they do not need to be changed, and can not participate in any reactions. If there is a cell that contains $B$ in the initial configuration but $A$ in the target, the instance is `no' as $B$ may never become $A$. 

Let any cell that starts in species $A$ but ends in species $B$ be called a \emph{flip} cell, and any species that starts in $A$ and stays in $A$ a \emph{catalyst} cell. 

An instance of reconfiguration with these reactions is solvable if and only if there exists a set of spanning trees, each rooted at a catalyst cell, that contain all the flip cells. Using these trees, we can construct a reaction sequence 
from post-order traversals of each spanning tree, where we have each non-root node react with its parent to change itself to a $B$.
In the other direction, given a reaction sequence, we can construct the spanning trees by pointing each flip cell to the neighbor it reacts with.
\end{proof}

\begin{lemma}
Reconfiguration with species $\{A, B\}$ and reaction $A + A \rightarrow B + B$ is solvable in polynomial time on any surface. 
\label{lem:2Sflip}
\end{lemma}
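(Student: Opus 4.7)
The plan is to reduce the problem to checking whether a certain induced subgraph of the surface admits a perfect matching, which is polynomial by Edmonds' blossom algorithm. The starting observation is that the species $B$ never appears on the left-hand side of the reaction, so once a cell becomes $B$ it is locked as $B$ forever. This gives an immediate necessary condition: every cell that is $B$ in the initial configuration must also be $B$ in the target, otherwise we output ``no''.

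Next, I would classify every cell that is $A$ in the initial configuration according to its target value. Call a cell a \emph{flip cell} if it is $A$ initially and $B$ in the target, and a \emph{catalyst cell} if it is $A$ in both. Each firing of the rule $A+A \to B+B$ converts exactly two adjacent $A$-cells into $B$-cells. Since catalyst cells must remain $A$, they can never participate in any reaction. Consequently, every reaction must be between two adjacent flip cells, and each flip cell participates in exactly one reaction (the one that finally turns it into $B$).

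Let $H$ be the subgraph of the surface induced by the flip cells. I claim the instance is solvable if and only if (i) the $B$-consistency condition above holds, and (ii) $H$ has a perfect matching. The forward direction: any valid reaction sequence naturally pairs each flip cell with its unique reaction partner, giving a perfect matching of $H$. The reverse direction: given a perfect matching $M$ of $H$, fire the reactions corresponding to the edges of $M$ in any order; because each flip cell lies in exactly one matched edge, the firings do not interfere (the two reactants are still $A$ when its reaction fires), and the end configuration matches the target exactly.

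Since perfect matching in a general graph is solvable in polynomial time via Edmonds' blossom algorithm, the overall algorithm runs in polynomial time. The main ``obstacle'' is really just formalizing the two-way correspondence between reaction sequences and matchings; once the three structural observations (B is a sink, catalysts never react, flip cells react exactly once) are in place, everything else follows directly.
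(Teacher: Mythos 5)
Your proposal is correct and takes essentially the same approach as the paper: reduce to perfect matching on the subgraph induced by the cells that are $A$ initially and $B$ in the target, with the observation that each such cell reacts exactly once and unchanged cells never react. You simply spell out the consistency checks (B is a sink, catalysts never fire) more explicitly than the paper does.
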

\begin{proof}
Reconfiguration in this case can be reduced to perfect matching. Create a graph $M$ including a node for each cell in $S$ containing the $A$ species initially and containing $B$ in the target, with edges between nodes of neighboring cells. If $M$ has a perfect matching, then each edge in the matching corresponds to a reaction that changes $A$ to $B$. If the target configuration is reachable, then the reactions form a perfect matching since they include each cell exactly once.
\end{proof}



\begin{theorem}\label{thm:2s1r}
Reconfiguration with $2$ species and $1$ reaction is in P on any surface.
\end{theorem}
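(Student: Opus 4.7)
The plan is to prove Theorem~\ref{thm:2s1r} by exhaustive case analysis over all possible single reactions on the species set $\{A,B\}$, reducing each case to one of the previously established polynomial-time results (Lemmas~\ref{lem:sameDiff} and~\ref{lem:2Sflip}, and Theorem~\ref{thm:easySwaps}) by using two symmetries: (i) relabeling of species (swap the roles of $A$ and $B$ throughout both the reaction and the initial/target configurations), and (ii) time reversal (swap the initial and target configurations, which replaces a reaction $X+Y\to Z+W$ by its reverse $Z+W\to X+Y$).

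First, I would enumerate all reactions of the form $X+Y\to Z+W$ with $X,Y,Z,W\in\{A,B\}$. Since the two cells involved in a reaction are unordered neighbors, the reaction $X+Y\to Z+W$ is the same as $Y+X\to W+Z$, so we may always write the reactant pair in a canonical order. Reactions whose product pair equals the reactant pair (as an ordered pair in at least one orientation) are trivial: no configuration changes, so reconfiguration is decidable in polynomial time by simply checking whether $I=T$. All remaining nontrivial reactions fall, up to the two symmetries above, into exactly one of the following four families:
\begin{enumerate}
    \item $A+A\to A+B$ (equivalently $B+B\to B+A$ after relabeling),
    \item $A+B\to A+A$ (the time-reverse of family 1, and equivalently $A+B\to B+B$ after relabeling),
    \item $A+A\to B+B$ (self-symmetric under relabeling),
    \item $A+B\to B+A$ (a swap reaction).
\end{enumerate}

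Second, I would dispatch each family. Family~1 is exactly the hypothesis of Lemma~\ref{lem:sameDiff}. Family~2 is the reverse of family~1, so by swapping the initial and target configurations we reduce to family~1; Lemma~\ref{lem:sameDiff} actually already states this directly. Family~3 is Lemma~\ref{lem:2Sflip}. Family~4 falls under Theorem~\ref{thm:easySwaps}, which puts reconfiguration in P for any instance with at most two swap reactions (and here we have exactly one). For the two cases needing relabeling ($B+B\to B+A$ and $A+B\to B+B$), I would explicitly observe that applying the bijection $A\leftrightarrow B$ to the initial and target configurations yields an equivalent instance of family~1 or~2, so the same algorithm applies.

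The main obstacle is not any deep argument but rather making the case analysis airtight, in particular being careful that the ``unordered reactants'' convention lets us merge reactions like $A+A\to A+B$ and $A+A\to B+A$, while the ``ordered output'' convention keeps $A+B\to B+A$ genuinely distinct from the identity. I would therefore include a short explicit table listing all $16$ syntactic reactions $X+Y\to Z+W$ and their canonical representative in one of the four families above, so the reader can verify completeness at a glance. Once this bookkeeping is done, the theorem follows immediately by invoking the three cited results.
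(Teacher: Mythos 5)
Your proposal is correct and follows essentially the same route as the paper: an exhaustive case analysis over the possible single reactions on two species, dispatching each case to Lemma~\ref{lem:sameDiff}, Lemma~\ref{lem:2Sflip}, or Theorem~\ref{thm:easySwaps} (the paper organizes the cases by whether the left-hand side is $A+A$ or $A+B$, while you additionally make the relabeling/time-reversal bookkeeping and the trivial identity reactions explicit). No gaps; your version is just a slightly more explicit write-up of the same argument.
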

\begin{proof}
As we only have two species and a single reaction, we can analyze each of the four cases to show membership in P.
We divide into two cases:

\textbf{\boldmath $A + A$:} When a species reacts with itself, it can either change both species, which is shown to be in P by Lemma~\ref{lem:2Sflip}; or it changes only one of the species, which is in P by Lemma~\ref{lem:sameDiff}. 

\textbf{\boldmath $A + B$:} When two different species react, they can either change to the same species, which is in P by Lemma~\ref{lem:sameDiff}; or they can both change, which is a swap and thus is in P by Theorem~\ref{thm:easySwaps}.
\end{proof}

\subsection{3 or more Species}
Moving up to $3$ species and $1$ reaction, we showed earlier that there exists a reaction for which reconfiguration is NP-complete in Theorem~\ref{thm:hamPathRed}.  
Here, we give reactions for which reconfiguration between $3$ species is in P, and in Corollary~\ref{cor:s3r1} we prove that all remaining reactions are isomorphic to one of the reactions we've analyzed.

\begin{lemma}\label{lem:s3Match}
Reconfiguration with species $(A, B, C)$ and reaction $A + B \rightarrow C + C$ is solvable in polynomial time on any surface.
\end{lemma}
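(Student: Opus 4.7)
The plan is to reduce the reconfiguration question to bipartite perfect matching, which is solvable in polynomial time. The key structural observation is that $C$ is an inert ``sink'' species under the only rule $A + B \to C + C$: once a cell becomes $C$, it can never participate in another reaction, and no rule produces $A$ or $B$. Hence every cell's species can change at most once over the entire trajectory, and it can only change from $A$ or $B$ to $C$.

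First I would write down the straightforward necessary feasibility conditions cell by cell. Let $I$ and $T$ be the initial and target configurations. For the instance to be a yes-instance we need: every cell with $I(v)=C$ has $T(v)=C$; every cell with $T(v)=A$ has $I(v)=A$; every cell with $T(v)=B$ has $I(v)=B$. If any of these fail, output ``no.'' Otherwise define the flip sets
\[
F_A = \{v : I(v)=A, T(v)=C\}, \qquad F_B = \{v : I(v)=B, T(v)=C\},
\]
and observe that every other cell is \emph{frozen}: it must never participate in a reaction (either it is already $C$, or it must retain its initial species in $T$).

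Next I would argue that the problem is equivalent to finding a perfect matching in the bipartite graph $H$ with vertex set $F_A \cup F_B$ and an edge between $u \in F_A$ and $v \in F_B$ whenever $u$ and $v$ are adjacent in the surface. In the forward direction, any reaction involving a frozen $A$- or $B$-cell would convert it to $C$, contradicting $T$; so every reaction in a valid sequence must pair a cell of $F_A$ with an adjacent cell of $F_B$, and since each cell changes at most once, the set of executed reactions forms a matching that saturates $F_A \cup F_B$, i.e.\ a perfect matching of $H$. In the reverse direction, given a perfect matching of $H$, executing its edges as reactions in any order is legal: the two cells of each matched edge carry species $A$ and $B$ at the moment we fire their reaction (frozen cells never change, and other $F_A$, $F_B$ cells not yet fired still hold their initial species), and after firing they become $C$ as required. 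Order-independence is immediate because distinct matched edges are vertex-disjoint and thus do not interfere.

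The only step requiring any care is the forward direction, where one must rule out ``wasteful'' reactions that could in principle involve frozen cells; but the $C$-sink property makes this immediate, since any participation by a frozen cell is irreversible and inconsistent with $T$. Once the equivalence is established, the algorithm is: check the cellwise feasibility conditions, build $H$ in linear time in the surface size, and run any polynomial-time bipartite matching algorithm (e.g.\ Hopcroft--Karp) to decide whether a perfect matching of $H$ exists. The main (minor) obstacle is simply making the $F_A$/$F_B$ bookkeeping airtight; there is no combinatorial difficulty beyond the matching reduction.
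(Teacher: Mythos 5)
Your proposal is correct and follows essentially the same route as the paper: both reduce the problem to a perfect matching on the cells that must change to $C$, with edges between adjacent cells initially holding $A$ and $B$ respectively. Your version merely spells out the cellwise feasibility checks and the bipartite structure a bit more explicitly, which is a harmless refinement of the same argument.
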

\begin{proof}
    At a high level, we create a new graph of all the cells that must change to species $C$, and add an edge when the two cells can react with each other. Since a reaction changes both cells to $C$ we can think of the reaction as ``covering'' the two reacting cells. Finding a perfect matching in this new graph will give a set of edges along which to perform the reactions to reach the target configuration. 

    Consider a surface $G$ and a subgraph $G' \subseteq G$ where we include a vertex $v'$ in $G'$ for each cell that contain $A$ or $B$ in the initial configuration and $C$ in the target configuration. We include an edge $(u',v')$ between any vertices in $G'$ that contain different initial species, i.e. any pair of cell which one initially contains $A$ and the other initially $B$. 

    Reconfiguration is possible if and only if there is a perfect matching in $G'$. If there is a perfect matching then there exists a set of edges which cover each cell once. Since $G'$ represents the cells that must change states, and the edges between them are reactions, the covering can be used as a sequence of pairs of cells to react. If there is a sequence of reactions then there exists a perfect matching in $G'$: each cell only reacts once so the matching must be perfect, and the cells that react have edges between them in $G'$.
\end{proof}

\begin{lemma}\label{lem:s3r1catalyst}
Reconfiguration with species $(A, B, C)$ and reaction $A + B \rightarrow A + C$ is solvable in polynomial time on any surface.
\end{lemma}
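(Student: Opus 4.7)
The plan is to exploit the fact that in the rule $A + B \to A + C$ the species $A$ acts purely as a catalyst and $C$ is completely inert. First I would record three monotonicity facts: every cell initially labelled $A$ stays $A$ forever (the rule never consumes $A$, and the only rule produces $A$ on its right-hand side from an $A$ on its left-hand side); every cell initially labelled $C$ stays $C$ (since $C$ never appears on the left side of any rule); and every cell initially labelled $B$ can only ever change to $C$, after which it is stuck. These three observations immediately force, for any reachable target, that the (initial, target) species pair at each cell lies in $\{(A,A),(C,C),(B,B),(B,C)\}$.

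Next I would pin down exactly which $B$-cells can be flipped to $C$. Because the set of $A$-labelled cells is invariant over time, a $B$ at position $q$ can ever fire the rule if and only if $q$ has an $A$-neighbor in the \emph{initial} configuration; moreover, firing the reaction at $q$ only changes the label at $q$ (the catalytic $A$-neighbor is unchanged), so reactions at distinct $B$-cells are completely independent and can be performed in any order. Hence any subset of those $B$-cells that have an initial $A$-neighbor can be flipped to $C$, and no others can.

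The polynomial-time algorithm then writes itself: loop over every cell and check that the (initial, target) pair lies in $\{(A,A),(C,C),(B,B),(B,C)\}$, and for each cell in the $(B,C)$ case additionally check that it has an $A$-neighbor in the initial configuration; answer \textbf{yes} iff every cell passes. This runs in $O(|V(G)|)$ time on any surface. The ``only if'' direction of correctness is immediate from the monotonicity facts, and the ``if'' direction is immediate from the independence of the $B \to C$ flips (just fire them one by one in arbitrary order). There is no real obstacle in this proof; the main task is to state the monotonicity and independence observations cleanly, in parallel with the earlier lemmas of Section~\ref{sec:1react}.
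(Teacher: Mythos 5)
Your proposal is correct and matches the paper's own argument: the paper's proof states precisely the same characterization (a cell may end in $C$ only if it started as $C$ or started as $B$ with an adjacent $A$, and all $A$/$B$ target cells must be unchanged), which your monotonicity and independence observations justify in slightly more detail. No gaps; this is essentially the same proof.
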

\begin{proof}
    The instance of reconfiguration is solvable if and only if any cell that ends with species $C$ either contained $C$ in the initial configuration, or started with species $B$ and have an $A$ adjacent to perform the reaction. Additionally, since a reaction cannot cause a cell to change to $A$ or $B$, each cell with an $A$ or $B$ in the target configuration must contain the same species in the initial configuration.
\end{proof}

The final case we study is $4$ species $1$ reaction. Any sCRN with $5$ or more species and $1$ reaction has a species which is not included in the reaction. 

\begin{lemma}\label{lem:4s1r}
Reconfiguration with species $(A, B, C, D)$ and the reaction $A + B \rightarrow C + D$ is in P on any surface.
\end{lemma}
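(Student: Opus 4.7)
The plan is to reduce this problem to a single bipartite perfect matching computation. The key structural observation is that the reaction $A+B\to C+D$ is ``frozen'': neither $C$ nor $D$ appears on the left-hand side of any rule, so any cell that ever contains $C$ or $D$ is stuck in that species forever. Consequently every cell changes at most once, and the only legal per-cell transitions are $A\to C$ (when that cell plays the $A$-role in some executed reaction) or $B\to D$ (when it plays the $B$-role).

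First I would check the obvious per-cell necessary conditions: any cell whose initial species is $C$ or $D$ must have matching target, any initial-$A$ cell must target $A$ or $C$, and any initial-$B$ cell must target $B$ or $D$. If any of these fails, reject. Cells whose initial equals their target are \emph{passive} and need never fire. Define \emph{flip-$A$} cells as those with initial $A$ and target $C$, and \emph{flip-$B$} cells as those with initial $B$ and target $D$; call these sets $X$ and $Y$. Build the bipartite graph $H=(X\cup Y, E)$ whose edges connect cells that are adjacent in the surface graph $G$.

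Next I would prove that the target is reachable if and only if $H$ admits a perfect matching saturating both $X$ and $Y$. For the forward direction, in any valid reaction sequence each cell of $X\cup Y$ must fire exactly once; every executed reaction pairs a cell of $X$ with an adjacent cell of $Y$, and since no cell fires twice, the set of fired reactions is a perfect matching in $H$. For the reverse direction, given a perfect matching $M$, fire its reactions in any order: by the matching property, at the moment a pair is fired both of its cells still hold their initial species (no earlier reaction has touched either endpoint), so the reaction is legal, and afterwards both cells freeze in their targets; passive cells together with initial-$C$ and initial-$D$ cells remain unchanged and already agree with the target. Bipartite perfect matching is decidable in polynomial time (Hopcroft--Karp, for instance), giving membership in P.

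The hard part will be making the equivalence airtight, in particular ruling out any subtle interaction between the order of reactions and the feasibility of the matching. This is handled cleanly by the frozen nature of $C$ and $D$: reactions on disjoint pairs of cells commute, while reactions sharing a cell are mutually exclusive, which is exactly the matching constraint. I do not expect any other ingredient to be needed, and the algorithm itself is a one-line reduction to bipartite matching after the linear-time feasibility screen on individual cells.
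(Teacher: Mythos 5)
Your proposal is correct and follows essentially the same route as the paper: build a graph on the cells that must change species, with edges between adjacent $A$/$B$ pairs, and decide reachability via a perfect matching. You add useful detail the paper leaves implicit (the per-cell feasibility screen, the observation that the graph is bipartite, and the explicit argument that fired reactions in a valid sequence form exactly such a matching), but the core reduction is identical.
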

\begin{proof}
    We can reduce Reconfiguration with $A + B \rightarrow C + D$ to perfect matching similar to Lemma \ref{lem:s3Match}. Create a new graph with each vertex representing a cell in the surface that must change species. Add an edge between each pair of neighboring cells that can react (between one containing $A$ and the other $B$). A perfect matching then corresponds to a sequence of reactions that changes each of the species in each cell to $C$ or $D$. 
\end{proof}

\begin{corollary}\label{cor:s3r1}
Reconfiguration with $3$ or greater species and $1$ reaction is NP-complete on any surface. 
\end{corollary}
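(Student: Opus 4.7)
The NP-hardness half is already in hand: Theorem~\ref{thm:hamPathRed} exhibits a concrete 3-species, 1-reaction sCRN (with reaction $A+B\to C+A$) whose reconfiguration problem is NP-hard via a reduction from Hamiltonian Path. For species counts strictly larger than three, my plan is to pad $S$ with extra symbols that are never used---they appear in neither the reaction nor either of the two configurations---so the decision question is identical to the 3-species one and NP-hardness is preserved for every species count at least three.

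The substantive part of the plan is to show that \emph{every} 1-reaction reconfiguration problem is in NP, uniformly over species count and surface. I will argue that whenever the target is reachable, it is reachable by a reaction sequence of polynomial length, which then serves as a polynomial-size NP witness that can be verified step by step in polynomial time. Write the single reaction as $X+Y\to Z+W$, and form the per-cell successor digraph $D$ on $S$ with edges $X\to Z$ and $Y\to W$ (every species outside $\{X,Y\}$ is a sink). Each individual cell's state trajectory is a walk in $D$, and each reaction changes the state of at most two cells, so a bound on per-cell trajectory length translates directly into a bound on sequence length.

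The cycle analysis is short because $D$ has at most two edges. A nontrivial 2-cycle forces $X\neq Y$, $Z=Y$, and $W=X$, which is exactly the swap $X+Y\to Y+X$; reconfiguration in that case is already in P, and therefore in NP, by Theorem~\ref{thm:easySwaps}. A self-loop (e.g.\ $Z=X$, or $W=Y$, or $W=X$ when $X=Y$) leaves the looping cell's state unchanged during the reaction, so such a reaction either drives another cell's state change or is entirely redundant and can be excised from a minimal witness. In every non-swap case, a cell's walk in $D$ is therefore acyclic of length at most $|S|$, so each cell undergoes at most $|S|$ state changes and a minimal sequence has length at most $|S|$ times the number of cells.

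The main obstacle I anticipate is the self-loop bookkeeping in the $X=Y$ cases (reactions $X+X\to Z+W$ with $Z=X$ or $W=X$): I will need to verify carefully that, even though one cell may repeatedly ``catalyze'' without changing state, every such reaction still produces at least one state change elsewhere, so the total count of reactions in a shortest witnessing sequence is bounded linearly by the number of cells. Once this bound is established, NP-membership follows uniformly, and combined with the NP-hardness inherited from Theorem~\ref{thm:hamPathRed} the corollary is complete.
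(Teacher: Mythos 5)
Your argument is correct, but your NP-membership proof takes a genuinely different route from the paper's. The paper establishes membership by an exhaustive case analysis of all single reactions up to relabeling: each case is shown either to be in P by a dedicated algorithm (catalyst-style reactions in Lemma~\ref{lem:s3r1catalyst}, $A+B\to C+C$ and $A+B\to C+D$ via perfect matching in Lemmas~\ref{lem:s3Match} and~\ref{lem:4s1r}, swaps via Theorem~\ref{thm:easySwaps}, two-species reactions via Theorem~\ref{thm:2s1r}), or to be in NP because the reaction is burnout-like (the $A+B\to C+A$ family of Theorem~\ref{thm:hamPathRed}, where each cell passes through $B,A,C$ and hence changes at most twice). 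Your per-cell successor digraph generalizes that burnout observation uniformly: with only the two edges $X\to Z$ and $Y\to W$, the sole nontrivial cycle is the anti-parallel pair characterizing a swap (already in P, hence NP, by Theorem~\ref{thm:easySwaps}), and in every other case the loop-free part of the digraph is acyclic, so each cell's compressed trajectory is a simple path of length at most $2$; since every application of a non-identity reaction changes at least one cell (your self-loop worry resolves exactly this way --- a reaction with a self-looping side still rewrites the other side, and the pure identity reaction is trivial), any reaction sequence has length $O(n)$ in the number of cells and is itself a polynomial witness. Your approach is shorter and uniform over species counts (with padding handling species beyond the four that can appear in the rule); what the paper's case analysis buys instead is the finer characterization of exactly which single reactions are polynomial-time solvable versus NP-complete, which is the broader goal of Section~\ref{sec:1react} rather than just the corollary. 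The hardness half (Theorem~\ref{thm:hamPathRed} plus unused padding species) coincides with the paper's.
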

\begin{proof}
    First, from Theorem \ref{thm:hamPathRed} we see that there exists a case of reconfiguration with $3$ species that is NP-hard with or without the burnout restriction. 

    For membership in NP, we analyze each possible reaction. We note that we only need to consider two cases for the left hand side of the rule, $A + A$ and $A + B$. Any other reaction is isomorphic to one of this form as we can relabel the species. For example, rule $B + C \rightarrow A + A$ can be relabeled as $A + B \rightarrow C + C$. Also, we know that $C$ must appear somewhere in the right hand side of the rule. If it does not then the reaction only takes place between two species, which is always polynomial time as shown above, or it involves a species we can relabel as $C$.

    Here are the cases for $A + B$ and our analysis results:

    \begin{table}[htp]
    \centering
    \begin{tabular}{|c|c|}
    \hline
       $A + B \rightarrow A + C $  & P in Lemma \ref{lem:s3r1catalyst} \\ \hline
       $A + B \rightarrow C + B $  & P in Lemma \ref{lem:s3r1catalyst} under isomorphism \\ \hline
       $A + B \rightarrow C + A $  & NP in Theorem \ref{thm:hamPathRed} \\ \hline
       $A + B \rightarrow B + C $  & NP in Theorem \ref{thm:hamPathRed} under isomorphism \\ \hline
       $A + B \rightarrow C + C $  & P in Lemma \ref{lem:s3Match} \\ \hline
       $A + B \rightarrow C + D $  & P in Lemma \ref{lem:4s1r} \\ \hline
    \end{tabular}
    \end{table}

    When we have $A + A$ on the left side of the rule, the only case we must consider is $A + A \rightarrow B + C$ (since all $3$ species must be included in the rule). We have already solved this reaction: first swap the labels of $A$ and $C$ giving rule $C + C \rightarrow B + A$, then reverse the rule to $B + A \rightarrow C + C$ and swap the initial and target configuration. Finally since rules do not care about orientation this is equivalent to the rule $A + B \rightarrow C + C$ in Lemma \ref{lem:s3Match}.

    Finally, for $4$ species and greater, the only new case is $A + B \rightarrow C + D$, which is proven to be in P in Lemma \ref{lem:4s1r}. Any other case would have species that are not used since a rule can only have $4$ different species in it. 

    Thus, all cases are either in NP, or in P which is a subset of NP, therefore, the problem is in NP. Also, since our results for each case apply for any surface, the same is true in general.
\end{proof}

\section{Conclusion}\label{sec:conc}
In this paper, we explored the complexity of the configuration problem within natural variations of the surface CRN model.  While general reconfiguration is known to be PSPACE-complete, we showed that it is still PSPACE-complete even with several extreme constraints. We first considered the case where only swap reactions are allowed, and showed reconfiguration is PSPACE-complete with only four species and three distinct reaction types.  We further showed that this is the smallest possible number of species for which the problem is hard by providing a polynomial-time solution for three or fewer species when only using swap reactions.

We next considered surface CRNs with rules other than just swap reactions. First, we considered the burnout version of the reconfiguration problem, and then followed by the normal version with small species counts.  In the case of $2$-burnout, we showed reconfiguration is NP-complete for three species and one reaction type, and $1$-burnout is NP-complete for 17 species with 40 distinct reaction types.  Without burnout, we achieved, as a corollary, that three species, one reaction type is NP-complete while showing that dropping the species count down to two yields a polynomial-time solution.

\subsection{Computing Polynomial Space Functions}
An interpretation of Theorem \ref{thm:oneRecSwap} is that surface Chemical Reactions are capable of computing any function that can be computed in polynomial space. Perhaps the most important PSPACE-Complete is the acceptance problem for polynomial space Turing machines. While there may be a few reduction between these problems, we can may turn any polynomial space Turing machine into a surface CRN such that the robot species swaps with a wire species at a target location. In experiments one can imagine the target location as having a special type of wire species that acts as a reporting, emitting a signal when it reacts with the robot species. The size of the surface is polynomial in the space of the Turing machine since these are all polynomial time reductions. While we do not claim this experiment can be done with such a small number of species, but rather that theoretically more sequence efficient reaction systems which can compute should exists by taking advantage of the surface. 

Our polynomial time algorithms describe experiments with 1, 2, or 3 reactions on surfaces where well studied algorithms for problems such as matching and motion planning may be of use. 

\subsection{Open Problems}
This work introduced new concepts that leaves open a number of directions for future work. While we have fully characterized the complexity of reconfiguration for the swap-only version of the model, the complexity of reconfiguration with general rule types for three species systems remains open if the system uses more than one rule. All of hardness results also use a square grid graph, while our algorithms work on general surfaces. We would like to know if the threshold for hardness can be lowered on more general graphs. In the $1$-burnout variant of the model, we have shown 1-reconfiguration to be NP-complete, but the question of general reconfiguration remains a ``burning'' open question. 

\bibliographystyle{abbrv}
\bibliography{main.bib}

\appendix

\end{document}